%
\documentclass{llncs}
\usepackage{graphicx}
%

\usepackage{algorithm}
\usepackage{algorithmic}
\usepackage{amsmath}
\usepackage{amssymb}
\usepackage{comment}
\usepackage{url}
\usepackage{listings, tikz}
\usetikzlibrary{positioning,shapes}
\usepackage{verbatimbox}

\newtheorem{Theorem}{Theorem}

\lstset{language=Java,
	breaklines=true,
	basicstyle=\ttfamily\footnotesize,
	keywordstyle=\bfseries,
	numbers=left,numberstyle=\scriptsize,
	columns=fullflexible, keepspaces=true,
}

\newcommand{\keyword}[1]{\mathsf{#1}}
\newcommand{\kw}[1]{\keyword{#1}}

\newcommand{\textcode}[1]{\lstinline|#1|}
\newcommand{\tc}[1]{\lstinline|#1|}

\newcommand{\kwnull}[0]{\keyword{null}}
\newcommand{\kwnew}[0]{\keyword{new}}
\newcommand{\kwextends}[0]{\keyword{extends}}
\newcommand{\kwclass}[0]{\keyword{class}}

\newcommand\Var{\mathtt{VAR}}

\newcommand\Obj{\mathtt{OBJ}}


\newcommand{\VPT}{\Omega}
\newcommand{\HPT}{\Phi}
\newcommand{\Class}{\mathcal{C}}
\newcommand{\Field}{\mathcal{F}}

\newcommand{\less}{\sqsubseteq}
\newcommand{\tflow}{\dashrightarrow}
\newcommand{\hflow}{\longrightarrow}
\newcommand{\lhflow}[1]{\stackrel{#1}{\hflow}}

\newcommand\set[1]{\{#1\}}
\newcommand\power{\mathcal{P}}

\begin{document}
\title{A Relational Static Semantics for Call Graph Construction}
%
%
\author{Xilong Zhuo \and Chenyi Zhang}
%
%
%
\institute{Jinan University}

\maketitle              
\begin{abstract}
The problem of resolving virtual method and interface calls in object-oriented languages has been a long standing challenge to the program analysis community. The complexities are due to various reasons, such as increased levels of class inheritance and polymorphism in large programs. In this paper, we propose a new approach called type flow analysis that represent propagation of type information between program variables by a group of relations without the help of a heap abstraction. We prove that regarding the precision on reachability of class information to a variable, our method produces results equivalent to that one can derive from a points-to analysis. Moreover, in practice, our method consumes lower time and space usage, as supported by the experimental results.

\end{abstract}

\section{Introduction}\label{sec:introduction}

For object-oriented programming languages, virtual methods (or functions) are those declared in a base class but are meant to be overridden in different child classes. Statically determine a set of methods that may be invoked at a call site is important to program optimization, from results of which a subsequent optimization may reduce the cost of virtual function calls or perform method inlining if target method forms a singleton set, and one may also remove methods that are never called by any call sites, or produce a call graph which can be useful in other optimization~processes.

Efficient solutions, such as Class Hierarchy Analysis (CHA)~\cite{Dean1995,Fernandez1995}, Rapid Type Analysis (RTA)~\cite{Bacon1996} and Variable Type Analysis (VTA)~\cite{Sundaresan2000}, conservatively assign each variable a set of class definitions, with relatively low precision. Alternatively, with the help of an abstract heap, one may take advantage of points-to analysis~\cite{andersen94} to compute a set of object abstractions that a variable may refer to, and resolve the receiver classes in order to find associated methods at call sites.

The algorithms used by CHA, RTA and VTA are conservative, which aim to provide an efficient way to resolve calling edges, and which usually take linear-time in the size of a program, by focusing on the types that are collected at the receiver of a call site. For instance, let $x$ be a variable of declared class $A$, then at a call site $x.m()$, CHA will draw a call edge from this call site to method $m()$ of class $A$ and every definition $m()$ of a class that extends $A$. In case class $A$ does not define $m()$, a call edge to an ancestor class that defines $m()$ will also be included. For a variable $x$ of declared interface $I$, CHA will draw a call edge from this call site to every method of name $m()$ defined in class $X$ that implements $I$.
We write $CHA(x,m)$ for the set of methods that are connected from call site $x.m()$ as resolved by  Class Hierarchy Analysis (CHA).
Rapid Type Analysis (RTA) is an improvement from CHA which resolves call site $x.m()$ to $CHA(x,m)\cap inst(P)$, where $inst(P)$ stands for the set of methods of classes that are instantiated in the program.

\begin{figure}[t!]
\begin{minipage}[t]{0.5\linewidth}
\centering
\begin{verbbox}
class A{
    A f;
    void m(){
        return this.f;
    }
}

class B extends A{}

class C extends A{}
\end{verbbox}
\theverbbox
\end{minipage}
\begin{minipage}[t]{0.5\linewidth}
\centering
\begin{verbbox}
1:  A x = new A();  //O_1
2:  B b = new B();  //O_2
3:  A y = new A();  //O_3
4:  C c = new C();  //O_4
5:  x.f = b;
6:  y.f = c;
7:  z = x.m();
\end{verbbox}
\theverbbox
\end{minipage}
\caption{An example that compares precision on type flow in a program.}\label{figure:example}
\end{figure}

Variable Type Analysis (VTA) is a further improvement. VTA defines a node for each variable, method, method parameter and field. Class names are treated as values and propagation of such values between variables work in the way of value flow.
As shown in Figure~\ref{figure:vta}, the statements on line $1-4$ initialize type information for variables $x$, $y$, $b$ and $c$,
and statements on line $5-7$ establish value flow relations. Since both $x$ and $y$ are assigned type $A$, $x.f$ and $y.f$ are both represented by node $A.f$, thus the set of types reaching $A.f$ is now $\set{B,C}$. (Note this is a more precise result than CHA and RTA which assign $A.f$ with the set $\set{A,B,C}$.)
Since $A.m.this$ refers to $x$, $this.f$ inside method $A.m()$ now refers to $A.f$. Therefore, through $A.m.return$, $z$ receives  $\set{B,C}$ as its final set of reaching types.

\begin{figure}[t!]
\begin{tabular}{cc}

\begin{minipage}[h]{0.5\linewidth}
\centering
\begin{tabular}{|c|c|}
  \hline
  \textbf{Statement} & \textbf{VTA fact} \\
   \hline
  \hline
  $A\ x = \kwnew\ A()$ & $x\leftarrow A$ \\ \hline
  $B\ b = \kwnew\ B()$ & $b\leftarrow B$ \\ \hline
  $A\ y = \kwnew\ A()$ & $y\leftarrow A$ \\ \hline
  $C\ c = \kwnew\ C()$ & $c\leftarrow C$ \\ \hline
  $x.f = b$ & $A.f\leftarrow b$ \\ \hline
  $y.f = c$ & $A.f\leftarrow c$ \\ \hline
   & $A.m.this\leftarrow x$ \\
  $z = x.m()$  & $A.m.return\leftarrow A.f$ \\
   & $z\leftarrow A.m.return$\\
  \hline
\end{tabular}
\caption{VTA facts on the example}\label{figure:vta}
\end{minipage}
&
\begin{minipage}[h]{0.5\linewidth}
\centering
\includegraphics[width=3.5cm]{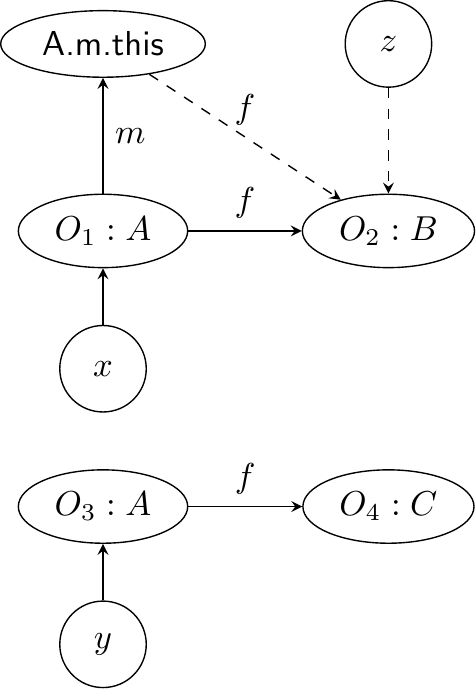}
\caption{Points-to results on the example}\label{figure:points-to}
\end{minipage}
\end{tabular}
\end{figure}

The result of a context-insensitive subset based points-to analysis~\cite{andersen94} creates a heap abstraction of four objects (shown on line $1-4$ of Figure~\ref{figure:example} as well as in the ellipses in Figure~\ref{figure:points-to}). These abstract objects are then inter-connected via field store access defined on line $5-6$. The derived field access from $A.m.this$ to $O_2$ is shown in dashed arrow. By return of the method call $z=x.m()$, variable $z$ receives $O_2$ of type $B$ from $A.m.this.f$, which gives a more precise set of reaching types for variable $z$.

From this example, one may conclude that the imprecision of VTA in comparison with points-to analysis is due to the over abstraction of object types, such that $O_1$ and $O_3$, both of type $A$, are treated under the same type. Nevertheless, points-to analysis requires to construct a heap abstraction, which brings in extra information, especially when we are only interested in the set of reaching types of a variable.

\begin{figure}[t!]
\centering
\includegraphics[width=8cm]{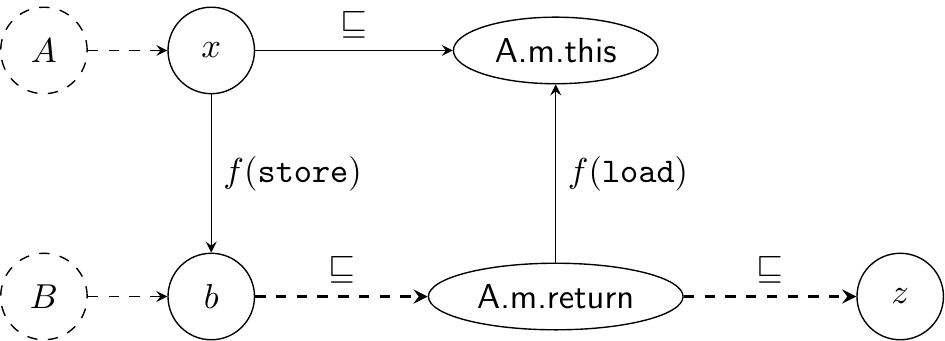}
\caption{Type Flow Analysis for variable $z$ in the example}\label{fig:tfa}
\end{figure}

In this paper we introduce a relational static semantics called Type Flow Analysis (TFA) on program variables and field accesses. Different from VTA, besides a binary value flow relation $\less$ on the variable domain $\Var$, where $x\less y$ denotes all types that flow to $x$ also flow to $y$, we also build a ternary field store relation $\rightarrow\ \subseteq\Var\times\Field\times\Var$ to trace the \emph{load} and \emph{store} relationship between variables via field accesses. This provides us additional ways to extend the relations $\less$ as well as $\rightarrow$. 
Taking the example from Figure~\ref{figure:example}, we are able to collect the store relation $x\lhflow{f}b$ from line $5$. Since $x\less \textsf{A.m.this}$, together with the implicit assignment which loads $f$ of $\textsf{A.m.return}$, we further derives $b\less\textsf{A.m.return}$ and $b\less z$ (dashed thick arrows in Figure~\ref{fig:tfa}). Therefore, we assign type $B$ to variable $z$. The complete reasoning pattern is depicted in Figure~\ref{fig:tfa}. Nevertheless, one cannot derive $c\less z$ in the same way.

We have proved that in the context-insensitive inter-procedural setting, TFA is as precise as the subset based points-to analysis regarding type related information. Since points-to analysis can be enhanced with various types of context-sensitivity on variables and objects (e.g., call-site-sensitivity~\cite{Shivers91,Kastrinis2013}, object-sensitivity~\cite{Milanova2005,Smaragdakis11,Tan16} and type-sensitivity~\cite{Smaragdakis11}), a context-sensitive type flow analysis will only require to consider contexts on variables, which is left for future work. The context-insensitive type flow analysis has been implemented in the Soot framework~\cite{soot}, and the implementation has been tested on a collection of benchmark programs from SPECjvm2008~\cite{specjvm} and DaCapo~\cite{Blackburn2006}. 
The initial experimental result has shown that TFA consumes similar or less runtime than CHA~\cite{Dean1995}, but has precision comparable to that of a points-to analysis.

\section{Type Flow Analysis}\label{sec:type-flow-analysis}

We define a core calculus consisting of most of the key object-oriented language features, as shown in Figure~\ref{fig:syntax}.
A program is defined as a code base $\overline{C}$ (i.e., a collection of class definitions) with statement $s$ to be evaluated.
To run a program, one may assume that $s$ is the default (static) entry method with local variable declarations $\overline{D}$,
similar to e.g., Java and C++,
which may differ in specific language designs.
We define a few auxiliary functions. $fields$ maps class names to its fields, $methods$ maps class names to its defined or inherited methods, and $type$ provides types (or class names) for objects. Given class $c$, if $f\in fields(c)$, then $ftype(c,f)$ is the defined class type of field $f$ in $c$. Similarly, give an object $o$, if $f\in fields(type(o))$, then $o.f$ may refer to an object of type $ftype(type(o),f)$,
or an object of any of its subclass at runtime. Write $\Class$ for the set of classes, $\Obj$ for the set of objects, $\Field$ for the set of fields and $\Var$ for the set of variables that appear in a program.


\begin{figure}[!htbp]\centering
	\begin{tabular}[c]{lll} 
		$C$&$::=$&$\kwclass\ c\ [\kwextends\ c] \ \{\overline{F};\ \overline{M}\}$\\
        $F$&$::=$&$c \ f$\\
        $D$&$::=$&$c \ z$\\
		$M$&$::=$&$m(x) \ \{\overline{D}; s; \kw{return}\ x'\}$\\
		$s$&$::=$&$e\mid x{=}\kwnew \ c\mid  x {=} e \mid x.f{=}y \mid s;s$\\
		$e$&$::=$&$ \kwnull\mid x \mid x.f \mid x.m(y) $\\
        $prog$&$::=$&$\overline{C};\overline{D}; s$\\
	\end{tabular}
	\caption{Abstract syntax for the core language. \label{fig:syntax}}
\end{figure}

In this simple language we do not model common types (e.g., $\kw{int}$ and $\kw{float}$) that are irrelevant to our analysis, and we focus on the reference types which form a class hierarchical structure. Similar to Variable Type Analysis (VTA), we assume a context insensitive setting, such that every variable can be uniquely determined by its name together with its enclosing class and methods.
For example, if a local variable $x$ is defined in method $m$ of class $c$, then $c.m.x$ is the unique representation of that variable.
Therefore, it is safe to drop the enclosing the class and method name if it is clear from the context.
In general, we have the following types of variables in our analysis: (1) local variables, (2) method parameters, (3) this reference of each method, all of which are syntactically bounded by their enclosing methods and classes.

We enrich the variable type analysis with the new type flow analysis by using three relations,
a partial order on variables $\less\ \subseteq\Var\times\Var$, a type flow relation
$\tflow\subseteq\Class\times\Var$, as well as a field access relation $\hflow\subseteq\Var\times\Field\times\Var$,
which are initially given as follows.
\begin{definition}\label{def:base} (Base Relations)
We have the following base facts for the three relations.
\begin{enumerate}
  \item $c\tflow x$ if there is a statement $x = \kwnew\ c$;
  \item $y\less x$ if there is a statement $x = y $;
  \item $x\lhflow{f}y$ if there is a statement $x.f = y$.  
\end{enumerate}
\end{definition}
Intuitively, $c\tflow x$ means variable $x$ may have type $c$ (i.e., $c$ flows to $x$), $y\less x$ means all types flow to $y$ also flow to $x$, and $x\lhflow{f}y$ means from variable $x$ and field $f$ one may access variable $y$.\footnote{Note that VTA treats statement $x.f = y$ as follows. For each class $c$ that flows to $x$ which defines field $f$, VTA assigns all types that flow to $y$ also to $c.f$.} These three relations are then extended by the following rules.
\begin{definition}\label{def:extension} (Extended Relations)
\begin{enumerate}
  \item For all statements $x = y.f $, if $y\lhflow{f\ *}z$, then $z\less^* x$.
  \item $c\tflow^* y$ if $c\tflow y$, or $\exists x\in\Var:c\tflow^* x\wedge x\less^* y$;
  \item $y\less^* x$ if $x=y$ or $y\less x$ or $\exists z\in\Var:y\less^* z\wedge z\less^* x$;
  \item $y\lhflow{f\ *}z$ if $\exists x\in\Var: x\lhflow{f}z\wedge (\exists z'\in\Var: z'\less^* y \wedge z'\less^*x)$;
  \item The type information is used to resolve each method call $x = y.m(z)$.
  \begin{equation*}
  \left.\begin{array}{l}\forall c\tflow^* y:\\ m(z')\{\dots \kw{return}\ x'\}\in methods(c):\end{array}\right.\left\{\begin{array}{l}
        z\less^* c.m.z'\\
        c\tflow^* c.m.this\\
        c,m.x'\less^* x \\
        \end{array}\right.
  \end{equation*}
\end{enumerate}
\end{definition}

The final relations are the least relations that satisfy constraints of Definition~\ref{def:extension}.
Comparing to VTA~\cite{Sundaresan2000}, we do not have field reference $c.f$ for each class $c$ defined in a program. Instead, we define a relation that connects the two variable names and one field name. 
Although the three relations are inter-dependent, one may find that without method call (i.e., Definition~\ref{def:extension}.5), a smallest model satisfying the two relations $\rightarrow^*$ (field access) and $\less^*$ (variable partial order) can be uniquely determined without considering the type flow relation $\tflow^*$.

In order to compare the precision of TFA with points-to analysis, we present a brief list of the classic subset-based points-to rules for our language in Figure~\ref{fig:constraints}, where $param(type(o),m))$, $this(type(o),m))$ and $return(type(o),m)$ refer to the formal parameter, \textsf{this} reference and \textsf{return} variable of the method $m$ of the class for which object $o$ is declared, respectively.

\begin{figure*}
	\centering %
    \begin{tabular}{|l|c|}
        \hline
    \textbf{statement} & \textbf{Points-to constraints} \\
    \hline
    $x = \kwnew\ c$ & $o_i\in\VPT(x)$\\
    \hline
    $x = y $ & $\VPT(y)\subseteq\VPT(x)$\\
    \hline
    $x = y.f $ & $\forall o\in\VPT(y):\HPT(o,f)\subseteq\VPT(x)$\\
    \hline
    $x.f = y $ & $\forall o\in\VPT(x):\VPT(y)\subseteq\HPT(o,f)$\\
    \hline
    $x=y.m(z)$ &
        \(\forall o\in\VPT(y):\left\{\begin{array}{l}
        \VPT(z)\subseteq\VPT(param(type(o),m))\\
        \VPT(this(type(o),m))=\set{o}\\
        \forall x'\in return(type(o),m):\\ \hspace{35pt} \VPT(x')\subseteq\VPT(x) \end{array}\right.\)
        \\
    \hline
	\end{tabular}
\caption{Constraints for points-to analysis. \label{fig:constraints}}
\end{figure*}

To this end we present the first result of the paper, which basically says type flow analysis has the same precision regarding type based check, such as call site resolution and cast failure check, when comparing with the points-to analysis.
\begin{Theorem}~\label{thm:tfa}
  In a context-insensitive analysis, for all variables $x$ and classes $c$, $c\tflow^*x$ iff there exists an object abstraction $o$ of $c$ such that $o\in\VPT(x)$.
\end{Theorem}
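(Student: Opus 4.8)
The plan is to prove the two directions separately, in each case by induction on the iterative construction of a least fixpoint. Both the TFA relations (the least relations closed under Definition~\ref{def:extension}) and the points-to solution (the least $\VPT,\HPT$ closed under the constraints of Figure~\ref{fig:constraints}) are least solutions of monotone constraint systems, so I can reason about the finite sequence of facts each analysis adds. The bridge between the two is a type-level correspondence: since points-to may record several objects of the same class (e.g. $O_1,O_3$ of type $A$ in the example) whereas TFA collapses them, the invariant I maintain is always of the form ``$c\tflow^* x$ holds exactly when some $o$ with $type(o)=c$ lies in $\VPT(x)$'', with the existential carried on the points-to side.

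For the completeness direction ($\Leftarrow$, TFA over-approximates points-to) I would induct on the points-to derivation and match each constraint of Figure~\ref{fig:constraints} to a TFA rule: $x=\kwnew\ c$ to the base fact $c\tflow x$ (Definition~\ref{def:base}.1); $x=y$ to $y\less x$ followed by Definition~\ref{def:extension}.2--3; $x.f=y$ to the base store $x\lhflow{f}y$; and $x=y.m(z)$ to Definition~\ref{def:extension}.5, where the resolution ``$o\in\VPT(y)$ with $type(o)=c$'' is mirrored by ``$c\tflow^* y$'', and the formal, this, and return variables coincide because the setting is context-insensitive. The one delicate rule is the load $x=y.f$: whenever an object $o$ enters $\VPT(x)$ through some $o'\in\VPT(y)$ with $o\in\HPT(o',f)$, that heap edge was created by a store $x_s.f=z$ with $o'\in\VPT(x_s)$ and $o\in\VPT(z)$; to produce $y\lhflow{f\ *}z$ by Definition~\ref{def:extension}.4 I must supply a common upstream variable $z'$ with $z'\less^* y$ and $z'\less^* x_s$, and I obtain it by tracing the shared object $o'\in\VPT(y)\cap\VPT(x_s)$ back to its unique allocation target, which flows by copy chains into both $y$ and $x_s$.

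For the soundness direction ($\Rightarrow$, the precision claim) I would induct on the TFA fixpoint and show each newly derived fact is justified by points-to. The core is a simultaneous invariant proved by one induction over all three relations: $y\less^* x$ implies $\VPT(y)\subseteq\VPT(x)$; $c\tflow^* x$ implies some $o$ with $type(o)=c$ lies in $\VPT(x)$; and, crucially, $y\lhflow{f\ *}z$ implies there is an object $o\in\VPT(y)$ with $\VPT(z)\subseteq\HPT(o,f)$. Granting this field-access invariant, the load rule Definition~\ref{def:extension}.1 (which yields $z\less^* x$ from $x=y.f$ and $y\lhflow{f\ *}z$) is immediately sound, and the remaining rules follow the same pattern as the completeness direction read backwards.

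The main obstacle is exactly this field-access invariant, i.e. reconciling how the two analyses represent \emph{aliasing}. Points-to records a concrete shared abstract object, whereas Definition~\ref{def:extension}.4 encodes the possibility that $y$ and the store variable $x_s$ alias by the mere existence of a common source $z'$ with $z'\less^* y$ and $z'\less^* x_s$. To turn such a $z'$ into an actual shared object I need $\VPT(z')$ to be nonempty, which by the $\tflow^*$ invariant holds precisely when some class flows to $z'$. I therefore expect to strengthen the induction with a provenance lemma stating that the common ancestor $z'$ produced in Definition~\ref{def:extension}.4 can always be taken to be an allocation target carrying a witnessing object, and to verify that every chain of $\lhflow{f\ *}$ and $\less^*$ that feeds a type flow preserves this nonemptiness. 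Making this witness bookkeeping precise, and checking it survives the mutual recursion among $\less^*$, $\tflow^*$ and $\lhflow{f\ *}$, is where the real work of the proof lies.
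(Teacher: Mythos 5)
Your plan diverges from the paper's proof in architecture, and it breaks down at exactly the point you flag as ``the real work.'' The paper does not perform a derivation-level induction with witness bookkeeping. Instead it (i) first refines every allocation site into a fresh ``mini-type,'' so that the class-to-object correspondence becomes a bijection $\xi$, with a (hand-waved) argument that merging mini-types back into classes preserves the result; (ii) reduces the interprocedural theorem to the intraprocedural Lemma~\ref{lem:tfa-intra} by arguing that both analyses resolve the same call edges round by round, so Definition~\ref{def:extension}.5 never enters the main induction at all; and (iii) proves the intraprocedural lemma by a model-containment argument: from the least TFA model it builds $Reaches(x)$ and $Access(c,f)$ and checks the four points-to constraints, and from any points-to model it defines the three TFA relations and checks closure, whence the two least models coincide. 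The bijection and the model-level argument are precisely what spare the paper your existential invariants. Note in particular that the paper translates the field relation \emph{universally} ($x\lhflow{f\ *}y$ iff \emph{every} $o_1\in\VPT(x)$ and $o_2\in\VPT(y)$ satisfy $o_2\in\HPT(o_1,f)$), which is vacuously stable on type-empty variables, whereas your existential form (``some $o\in\VPT(y)$ with $\VPT(z)\subseteq\HPT(o,f)$'') is not preserved by Definition~\ref{def:extension}.4.

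The concrete gap is that your provenance lemma is false for the system as printed, for two reasons. First, Definition~\ref{def:extension}.4 fires with \emph{any} common ancestor $z'$, including one to which no class flows: take $z'$ declared but never assigned, with $y=z'$, $x_s=z'$, $w=\kwnew\ c$, $x_s.f=w$, $x=y.f$; TFA derives $y\lhflow{f\ *}w$ and then $c\tflow^* x$, yet $\VPT(z')=\emptyset$ and there is no witnessing object, so no strengthening can make the ancestor ``carry a witness'' through the induction (this is a genuine boundary case of the theorem itself, which the paper's load case also silently assumes away by picking $o_1\in\VPT(y)$). Second, and fatal for your completeness direction: Definition~\ref{def:extension}.5 binds receivers to $\kwthis$ by a type-flow fact $c\tflow^* c.m.this$, \emph{not} by a $\less^*$ edge, so when the shared object $o'$ lies in $\VPT(c.m.this)$ there is no $\less^*$ path from its allocation target into $\kwthis$; your ``trace the shared object back to its allocation target'' step then cannot supply the $z'$ required by Definition~\ref{def:extension}.4 for a load $u=\kwthis.f$ whose matching store occurred in the caller. (The paper's own introductory example quietly uses $x\less A.m.this$, in tension with Definition~\ref{def:extension}.5, and its proof sidesteps the interaction entirely by quarantining calls inside the round-by-round stabilization argument.) Also beware that objects reach variables through loads, parameters and returns, not only ``copy chains,'' so even away from $\kwthis$ your invariant must be proved simultaneously with the mutual recursion, as you suspected. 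Without either adopting the paper's mini-type/bijection device or repairing the receiver-binding rule, the simultaneous induction you outline cannot be closed.
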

\begin{proof} (sketch)
For a proof sketch, first we assume every object creation site $x = \kwnew\ c_i$ at line $i$ defines a mini-type $c_i$, and if the theorem is satisfied in this setting, a subsequent merging of mini-types into classes will preserve the result.

Moreover, we only need to prove the intraprocedural setting which is the result of Lemma~\ref{lem:tfa-intra}. Because if in the intraprocedural setting the two systems have the same smallest model for all methods, then at each call site $x=y.m(a)$ both analyses will assign $y$ the same set of classes and thus resolve the call site to the same set of method definitions, and as a consequence, each method body will be given the same set of extra conditions, thus all methods will have the same initial condition for the next round iteration. Therefore, both inter-procedural systems will eventually stabilize at the same model. \qed
\end{proof}


\begin{lemma}\label{lem:tfa-intra}
In a context-insensitive intraprocedural analysis where each class $c$ only syntactically appears once in the form of $\kwnew\ c$, for all variables $x$ and classes $c$, $c\tflow^*x$ iff there exists an object abstraction $o$ of type $c$ such that $o\in\VPT(x)$.
\end{lemma}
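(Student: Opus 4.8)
The plan is to prove the biconditional by two containments, each obtained from the fact that both the type-flow relations of Definition~\ref{def:extension} and the points-to maps of Figure~\ref{fig:constraints} are \emph{least} solutions of their respective constraint systems. Since each class $c$ occurs in exactly one statement $w_c = \kwnew\ c$, I would identify $c$ with the single object abstraction $o_c$ it creates, so the goal reads $c\tflow^* x$ iff $o_c\in\VPT(x)$. Two preliminary simplifications help. First, in the intraprocedural setting Definition~\ref{def:extension}.5 is absent, so, as already noted in the text, $\less^*$ and $\lhflow{f\ *}$ form a closed subsystem that can be built without reference to $\tflow^*$, after which $\tflow^*$ collapses to reachability from creation sites: $c\tflow^* x$ iff $w_c\less^* x$. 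Second, I fix a bridge by setting $\VPT'(x)=\{o_c\mid c\tflow^* x\}$ and $\HPT'(o_c,f)=\{o_d\mid \exists\, v\lhflow{f}w \text{ a base store } v.f=w \text{ with } c\tflow^* v \text{ and } d\tflow^* w\}$.

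For the direction $o_c\in\VPT(x)\Rightarrow c\tflow^* x$, I would verify that $(\VPT',\HPT')$ satisfies every points-to constraint; minimality of the genuine points-to solution then forces $\VPT(x)\subseteq\VPT'(x)$ for all $x$, which is exactly this direction. The new, assignment and store constraints are immediate from the definition of $\VPT'$ and $\HPT'$. The only interesting case is the load $x=y.f$: from $o_c\in\VPT'(y)$ and $o_d\in\HPT'(o_c,f)$ I obtain $c\tflow^* y$ together with a base store $v.f=w$ having $c\tflow^* v$ and $d\tflow^* w$. Because $c$ has the \emph{unique} source $w_c$, both $c\tflow^* y$ and $c\tflow^* v$ give $w_c\less^* y$ and $w_c\less^* v$, so $w_c$ witnesses the common-ancestor premise of Definition~\ref{def:extension}.4 and yields $y\lhflow{f\ *}w$; Definition~\ref{def:extension}.1 then gives $w\less^* x$, and with $d\tflow^* w$ I conclude $d\tflow^* x$, i.e. $o_d\in\VPT'(x)$. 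The single-creation-site hypothesis is precisely what makes this step go through.

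For the converse $c\tflow^* x\Rightarrow o_c\in\VPT(x)$, I would run the symmetric argument: consider the trial relations built out of the genuine points-to solution — declaring $c\tflow^* x$ iff $o_c\in\VPT(x)$, $y\less^* x$ iff $\VPT(y)\subseteq\VPT(x)$, and $u\lhflow{f\ *}w$ iff some base store $v.f=w$ has a common predecessor $z'$ with $\VPT(z')\subseteq\VPT(u)\cap\VPT(v)$ — and check that they are closed under all rules of Definition~\ref{def:extension}, so that by minimality the genuine type-flow relations are contained in them. Reflexivity, transitivity, the base cases and the type-propagation rule~\ref{def:extension}.2 are routine; the delicate rule is the field load, Definition~\ref{def:extension}.1 fed by Definition~\ref{def:extension}.4.

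The hard part, and the step I expect to be the genuine obstacle, is exactly this load rule in the second direction. To discharge it I must turn the \emph{syntactic} aliasing of Definition~\ref{def:extension}.4 — the mere existence of a common $\less^*$-ancestor $z'$ of the load receiver $u$ and the store base $v$ — into a \emph{concrete} shared object $o\in\VPT(u)\cap\VPT(v)$, which is what the points-to store and load rules need in order to move $\VPT(w)$ into $\VPT(x)$. This works as long as the common ancestor actually points to something, i.e. $\VPT(z')\neq\emptyset$: then any $o\in\VPT(z')$ lies in both $\VPT(u)$ and $\VPT(v)$ and carries the stored value through. The problematic case is a common ancestor with empty points-to — in the extreme, a load $x=u.f$ whose receiver $u$ is never assigned an object, so that $u$ is trivially its own ancestor — where Definition~\ref{def:extension}.4 still fires and propagates a type that the points-to analysis never produces. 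I therefore expect the argument to rely on the standing assumption that field accesses occur on receivers pointing to at least one object (no dereference of an uninitialised variable), equivalently to read Definition~\ref{def:extension}.4 with the side condition that the shared ancestor $z'$ itself carries a type; showing that this condition is precisely what reconciles the two notions of aliasing is where the real work lies.
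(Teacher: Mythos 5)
Your proposal is, in skeleton, the same proof as the paper's: both directions are obtained by constructing, from the least solution of one constraint system, a candidate solution of the other and invoking minimality, your $(\VPT',\HPT')$ playing the role of the paper's $(\xi(Reaches),\xi(Access))$, and the single-creation-site hypothesis is used at the identical spot in the load case (the unique source $w_c$ of $c$ supplies the common $\less^*$-ancestor needed to fire Definition~\ref{def:extension}.4).

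What you add beyond the paper is that the obstacle you flag in the converse direction is genuine, and the paper's proof does not discharge it. The paper's candidate relation ``$x\lhflow{f\ *}y$ iff $\forall o_1\in\VPT(x),\,o_2\in\VPT(y):\,o_2\in\HPT(o_1,f)$'' is vacuously total when $\VPT(x)=\emptyset$, and closure under Definition~\ref{def:extension}.1 then fails; the paper verifies only the four statement-driven cases and waves the remaining rules off as straightforward transitivity, which Definition~\ref{def:extension}.4 is not. Concretely, for $w=\kwnew\ d;\ u.f=w;\ x=u.f$ with $u$ never assigned, reflexivity of $\less^*$ lets Definition~\ref{def:extension}.4 derive $u\lhflow{f\ *}w$ (take $z'=u$), hence $w\less^* x$ and $d\tflow^* x$, while the least points-to model has $\VPT(u)=\VPT(x)=\emptyset$. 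So the lemma as literally stated fails on programs that dereference object-free receivers, and the side condition you identify --- that the shared $\less^*$-ancestor in rule 4 (equivalently, every dereferenced receiver) points to at least one object --- is exactly the missing hypothesis: under it, any $o\in\VPT(z')$ witnesses the concrete alias and your closure check goes through. Modulo stating that assumption explicitly rather than leaving it as ``where the real work lies,'' your argument is complete and otherwise coincides with the paper's.
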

\begin{proof}
Since the points-to constraints define the smallest model $(\VPT, \HPT)$ with $\VPT:\Var\rightarrow\Obj$ and $\HPT:\Obj\times\Field\rightarrow\power(\Obj)$, and the three relations of type flow analysis also define the smallest model that satisfies Definition~\ref{def:base} and Definition~\ref{def:extension}, we prove that every model of points-to constraints is also a model of TFA, and vice versa. Then the least model of both systems must be the same, as otherwise it would lead to contradiction.

\medskip

($\star$) For the `only if' part ($\Rightarrow$), we define $Reaches(x)=\set{c\mid c\tflow^* x}$, and assume a bijection $\xi:\Class\rightarrow\Obj$ that maps each class $c$ to the unique object $o$ that is defined (and $type(o)=c$). Then we construct a function $Access:\Class\times\Field\rightarrow\power(\Class)$ and show that $(\xi(Reaches),\xi(Access))$ satisfies the points-to constraints. Define $Access(c,f)=\set{c'\mid x\lhflow{f\ *}y\wedge c\in Reaches(x)\wedge c'\in Reaches(y)}$. We prove the following cases according to the top four points-to constraints in Figure~\ref{fig:constraints}.
\begin{itemize}
\item For each statement $x = \kwnew\ c$, we have $\xi(c)\in\xi(Reaches(x))$;
\item For each statement $x = y$, we have $Reaches(y)\subseteq Reaches(x)$ and $\xi(Reaches(y))\subseteq\xi(Reaches(x))$;
\item For each statement $x.f = y$, we have $x\lhflow{f}y$, then by definition for all $c\in Reaches(x)$, and $c'\in Reaches(y)$, we have $c'\in Access(c,f)$, therefore $\xi(c')\in\xi(Reaches(y))$ we have $\xi(c')\in \xi(Access(\xi(c),f))$.
\item For each statement $x = y.f$, let $c\in Reaches(y)$, we need to show $\xi(Access(c,f))\subseteq\xi(Reaches(x))$, or equivalently, $Access(c,f)\subseteq Access(x)$. Let $c'\in Access(c,f)$, then by definition, there exist $z$, $z'$ such that $c\in Reaches(z)$, $c'\in Reaches(z')$ and $z\lhflow{f\ *}z'$. By $c\in Reaches(y)$ and Definition~\ref{def:extension}.4, we have $y\lhflow{f\ *}z'$. Then by Definition~\ref{def:extension}.1, $z'\less^*x$. Therefore $c'\in Reaches(x)$.
\end{itemize}

\medskip

($\star$) For the `if' part ($\Leftarrow$), let ($\VPT$, $\HPT$) be a model that satisfies all the top four constraints defined in Figure~\ref{fig:constraints}, and a bijection $\xi:\Class\rightarrow\Obj$, we show the following constructed relations satisfy value points-to.
\begin{itemize}
  \item For all types $c$ and variables $x$, $c\tflow^* x$ if $\xi(c)\in\VPT(x)$;
  \item For all variables $x$ and $y$, $x\less^*y$ if $\VPT(x)\subseteq\VPT(y)$;
  \item For all variables $x$ and $y$, and for all fields $f$, $x\lhflow{f\ *}y$ if for all $o_1,o_2\in\Obj$ such that $o_1\in\VPT(x)$ and $o_2\in\VPT(y)$ then $o_2\in\HPT(o_1,f)$.
\end{itemize}
We check the following cases for the three relations $\tflow^*$, $\less^*$ and $\rightarrow$ that are just defined from the above.
\begin{itemize}
\item For each statement $x = \kwnew\ c$, we have $\xi(c)\in\VPT(x)$, so $c\tflow^* x$ by definition.
\item For each statement $x = y$, we have $\VPT(y)\subseteq\VPT(x)$, therefore $y\less^*x$ by definition.
\item For each statement $x.f = y$, we have for all $o_1\in\VPT(x)$ and $o_2\in\VPT(y)$, $o_2\in\HPT(o_1,f)$, which derives $x\lhflow{f\ *}y$ by definition.
\item  For each statement $x = y.f$, given $y\lhflow{f\ *}z$, we need to show $z\less^* x$. Equivalently, by definition we have for all $o_1\in\VPT(y)$ and $o_2\in\VPT(z)$, $o_2\in\HPT(o_1,f)$. Since points-to relation gives $\HPT(o_1,f)\subseteq\VPT(x)$, we have $o_2\in\VPT(x)$, which derives $\VPT(z)\subseteq\VPT(x)$, the definition of $z\less^* x$.
\item The proof for the properties in the rest of Definition~\ref{def:extension} are related to transitivity of the three TFA relations, which are straightforward. We leave them for interested readers. \qed
\end{itemize}
\end{proof}


\section{Implementation and Optimization}\label{sec:minimization}

The analysis algorithm is written in Java, and is implemented in the Soot framework~\cite{soot}, the most popular static analysis framework for Java. The three base relations (i.e., $\tflow$, $\less$ and $\rightarrow$) of Definition~\ref{def:base} are extracted from Soot's intermediate representation and the extended relations (i.e., $\tflow^*$, $\less^*$ and $\rightarrow^*$) of Definition~\ref{def:extension} are then computed considering the mutual dependency relations between them. Since we are only interested in reference types, we do not carry out analysis on basic types such as \textsf{boolean}, \textsf{int} and \textsf{double}. We also do not consider more advanced Java features such as functional interfaces and lambda expressions, as well as usages of Java Native Interface (JNI), nor method calls via Java reflective API. We have not tried to apply the approach to Java libraries, all invocation of methods from JDK are treated as end points, thus all possible call back edges will be missed in the analysis.
Array accesses are treated conservatively---all type information that flows to one member of a reference array flows to all members of that array, so that only one node is generated for each~array.

Since call graph information may be saved and be used for subsequent analyses, we propose the following two ways to reduce storage for computed result. If a number of variables are similar regarding type information in a graph representation, they can be merged and then referred to by the merged node.

\begin{enumerate}
  \item If $x\less^*y$ and $y\less^*x$, we say $x$ and $y$ form an \emph{alias pair}, written $x\sim y$. Intuitively, $\Var/\sim$ is a partition of $\Var$ such that each $c\in\Var/\sim$ is a strongly connected component (SCC) in the variable graph edged by relation $\less$, which can be quickly collected by using Tarjan's algorithm~\cite{Tarjan72}.
  \item A more aggressive compression can be achieved in a way similar to bisimulation minimization of finite state systems~\cite{Kanellakis90,Paige87}. Define $\approx\ \subset\Var\times\Var$ such that $x\approx y$ is symmetric and if
      \begin{itemize}
      \item for all class $c$, $c\tflow^* x$ iff $c\tflow^*y$, and
      \item for all $x\lhflow{f\ *}x'$ there exists $y\lhflow{f\ *}y'$ and $x'\approx y'$.
      \end{itemize}
\end{enumerate}

It is straightforward to see that $\approx$ is a more aggressive merging scheme.
\begin{lemma}\label{lem:bisimilar}
For all $x,y\in\Var$, $x\sim y$ implies $x\approx y$.
\end{lemma}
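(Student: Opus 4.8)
The plan is to read $\approx$ as the greatest symmetric relation closed under the two transfer conditions listed (that is, as bisimilarity in the sense of bisimulation minimization), and then invoke the coinduction principle: to establish $\mathord{\sim}\subseteq\mathord{\approx}$ it suffices to exhibit $\sim$ as one such relation, i.e. to show that $\sim$ is itself a bisimulation. I would first record that $\sim$ is symmetric by construction, since $x\sim y$ holds exactly when $x\less^* y$ and $y\less^* x$, a symmetric condition; and that it is reflexive, because the clause $x=y$ of Definition~\ref{def:extension}.3 gives $x\less^* x$. So the only real work is to verify the two transfer conditions for an arbitrary pair $x\sim y$.

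For the first condition I would argue that $\sim$-related variables reach exactly the same types. Assume $x\sim y$, so $x\less^* y$ and $y\less^* x$. If $c\tflow^* x$, then since $x\less^* y$, Definition~\ref{def:extension}.2 yields $c\tflow^* y$; the converse follows symmetrically from $y\less^* x$. Hence $c\tflow^* x$ iff $c\tflow^* y$ for every class $c$. This step is essentially monotonicity of the type-flow relation along $\less^*$ and is immediate.

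The main technical step is the second (field-access) condition, and here the key observation is that we can always respond to a move $x\lhflow{f\ *}x'$ by the \emph{same} successor, taking $y'=x'$; since $\sim$ is reflexive we then have $x'\sim y'$ for free, and it only remains to show $y\lhflow{f\ *}x'$. Unfolding Definition~\ref{def:extension}.4, the hypothesis $x\lhflow{f\ *}x'$ supplies a base edge $w\lhflow{f}x'$ together with a witness $z'$ satisfying $z'\less^* x$ and $z'\less^* w$. Using $x\less^* y$ from $x\sim y$ and transitivity of $\less^*$ (Definition~\ref{def:extension}.3), we obtain $z'\less^* y$; the same $z'$ now witnesses $y\lhflow{f\ *}x'$ via the rule of Definition~\ref{def:extension}.4 applied with the unchanged base edge $w\lhflow{f}x'$. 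This establishes the required matching move.

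I expect the only genuine obstacle to be the bookkeeping around Definition~\ref{def:extension}.4: one must unfold the existential witness $z'$ attached to $x$ and transport it to $y$ along $x\less^* y$, rather than trying to manufacture a genuinely different target $y'$. Once this ``reuse-the-same-successor'' idea is in place, both transfer conditions close, $\sim$ is a bisimulation, and maximality of $\approx$ gives $\mathord{\sim}\subseteq\mathord{\approx}$, i.e. $x\sim y$ implies $x\approx y$. A minor point worth stating explicitly is the reflexivity of $\sim$, which is exactly what lets the choice $y'=x'$ discharge the coinductive obligation.
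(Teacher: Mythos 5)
Your proof is correct; note that the paper states this lemma without any proof at all (it is offered as evident from the preceding remark that $\approx$ is ``a more aggressive merging scheme''), so there is no authorial argument to diverge from, and yours is exactly the natural coinductive one: read $\approx$ as the largest relation satisfying the two transfer conditions and verify that $\sim$ is itself such a relation. Both of your key steps check out against the definitions --- the type condition follows by pushing $c\tflow^* x$ along $x\less^* y$ (and symmetrically along $y\less^* x$) via Definition~\ref{def:extension}.2, and the field condition closes by answering $x\lhflow{f\ *}x'$ with the same successor $y'=x'$, which is legitimate because $\sim$ is reflexive via the $x=y$ clause of Definition~\ref{def:extension}.3, after transporting the existential witness $z'$ of Definition~\ref{def:extension}.4 (with $z'\less^* x$ and $z'\less^* w$ for the base edge $w\lhflow{f}x'$) to $z'\less^* y$ by transitivity of $\less^*$.
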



We have implemented the second storage minimization scheme by using Kanellakis and Smolka's algorithm~\cite{Kanellakis90} which computes the largest bisimulation relation for a given finite state labelled transition system. In our interpretation, the variables are treated as states and the field access relation is treated as the state transition relation. The algorithm then merges equivalent variables into a single group. As a storage optimization process, this implementation has been tested and evaluated in the next section.

\section{Experiment and Evaluation}\label{sec:experiment}

We evaluate our approach by measuring its performance on $13$ benchmark programs. Among the benchmark programs, \textit{compress}, \textit{crypto} are from the SPECjvm2008 suite~\cite{specjvm}, and the other $11$ programs are from the DaCapo suite~\cite{Blackburn2006}. We randomly selected these test cases from the two benchmark suites, in order to test code bases that are representative from a variety of sizes.
All of our experiments were conducted on a Huawei Laptop equipped with Intel i5-8250U processor at 1.60 GHz and 8 GB memory, running Ubuntu 16.04LTS with OpenJDK 1.8.0.

We compare our approach against the default implementation of Class Hierarchy Analysis (CHA) and context-insensitive points-to analysis~\cite{Lhotak2003} that are implemented by the Soot team. We use Soot as our basic framework to extract the SSA based representation of the benchmark code. We also generate automata representation for the resulting relations which can be visualized in a subsequent user-friendly manual inspection. The choice of the context-insensitive points-to analysis is due to our approach also being context-insensitive, thus the results will be comparable. In the following tables we use CHA, PTA and TFA to refer to the results related to class hierarchy analysis, points-to analysis and type flow analysis, respectively.

During the evaluation the following three research questions are addressed.

\paragraph{RQ 1:} How efficient is our approach compared with the traditional class hierarchy analysis and points-to analysis?

\paragraph{RQ 2:} How accurate is the result of our approach when comparing with the other analyses?

\paragraph{RQ 3:} Does our optimization (or minimization) algorithm achieve significantly reduce storage consumption?


\begin{figure}[t!]\centering
\begin{tabular}{lcccccc}
	\hline
	\textbf{bench} & \textbf{T$_{\textsf{CHA}}(s)$} & \textbf{T$_{\textsf{PTA}}(s)$} & \textbf{T$_{\textsf{TFA}}(s)$} & \hspace{2pt}\textbf{R${\tflow}$}\hspace{2pt} & \hspace{2pt}\textbf{R$_{\less}$}\hspace{2pt} & \hspace{2pt}\textbf{R${\rightarrow}$} \hspace{2pt}\\
	\hline
	compress & 0.02 & 0.12 & 0.02 & 87 & 202 & 24\\
	crypto & 0.01 & 0.12 & 0.03 & 94 & 226 & 18\\
	bootstrap & 23.74 & 34.13 & 0.03 & 191 & 453 & 17\\
	commons-codec & 0.009 & 0.12 & 0.13 & 306 & 3324 & 49\\
	junit & 24.45 & 34.24 & 0.18 & 1075 & 5772 & 241\\
	commons-httpclient & 0.009 & 0.12 & 0.36 & 2423 & 8511 & 521 \\
	serializer & 22.60 & 32.68 & 1.71 & 3006 & 17726 & 331\\
	xerces & 21.69 & 34.83 & 3.74 & 12590 & 72503 & 2779\\
	eclipse & 21.95 & 42.64 & 1.68 & 7933 & 37435 & 1620\\
	derby & 22.77 & 48.82 & 18.09 & 20698 & 191854 & 5386\\
	xalan & 78.40 & - & 42.11 & 32971 & 162249 & 3696\\
	antlr & 44.20 & - & 3.96 & 16117 & 76741 & 3879\\
	batik & 45.84 & - & 6.38 & 29409 & 122534 & 6039 \\
	\hline
\end{tabular}
\caption{Runtime cost with different analysis}
\label{experiment:TimeCost}
\end{figure}

\subsection{RQ1: Efficiency}\label{subsec:efficiency}
To answer the first research question, we executed each benchmark program $10$ times with the CHA, PTA and TFA algorithms. We calculated the average time consumption as displayed at the left three columns of the Table in Figure~\ref{experiment:TimeCost}. The sizes of each relation that our approach generated (i.e., the type flow relation `$\tflow$', variable partial order `$\less$' and the field access `$\rightarrow$') are counted, which provides an estimation of size for the problem we are treating. One may observe that when the problem size increases, the execution time of the our algorithm also increases in a way similar to CHA, though in general the runtime of CHA is supposed to grow linearly in the size of a program. The reason that TFA sometimes outperforms CHA may be partially due to the size of the intrinsic complexity of the class and interface hierarchical structure that a program adopts. TFA is in general more efficient than the points-to analysis.
The runtime cost in TFA basically depends on the size of generated relations, as well as the relational complexity as most of the time is consumed to calculate a fixpoint. For PTA it also requires extra time for maintaining and updating a heap abstraction.

Taking a closer look at the benchmark \textit{bootstrap}, CHA and PTA analyze the benchmark using about $23.74$ and $34.13$ seconds, respectively. As TFA only generated $661$ relations, the analysis only takes $0.03$ second. 
There are $3$ benchmarks, \textit{xalan, antlr, batik}, that cannot be analyzed by PTA, marked as ``\emph{-}'' in the table in Figure~\ref{experiment:TimeCost}. In these cases the points-to analysis has caused an exception called ``OutOfMemoryError with JVM GC overhead limit exceeded'', as the JVM garbage collector is taking an excessive amount of time (by default $98\%$ of all CPU time of the process) and recovers very little memory in each run (by default $2\%$ of the heap).

\begin{figure}[t!]
  \centering
\begin{tabular}{lcccc}
	\hline
	\textbf{bench} & \textbf{CS$_{\textsf{base}}$} &\hspace{5pt} \textbf{CS$_{\textsf{CHA}}$}\hspace{5pt} &\hspace{5pt} \textbf{CS$_{\textsf{PTA}}$}\hspace{5pt} & \hspace{5pt} \textbf{CS$_{\textsf{TFA}}$}\hspace{5pt} \\
	\hline
	compress & 153 & 160 & 18 & 73 \\
	crypto & 302 & 307 & 62 & 121 \\
	bootstrap & 657 & 801 & 891 & 328 \\
	commons-codec & 1162 & 1372 & 270 & 554 \\
	junit & 3196 & 17532 & 11176 & 1197 \\
	commons-httpclient & 6817 & 17118 & 567 & 2928 \\
	serializer & 4782 & 9533 & 1248 & 1756 \\
	xerces & 24579 & 56252 & 10631 & 8120 \\
	eclipse & 23607 & 95073 & 70016 & 9379 \\
	derby & 69537 & 180428 & 85212 & 16381 \\
	xalan & 57430 & 155866 & - & 18669 \\
	antlr & 62007 & 147014 & - & 17177 \\
	batik & 56877 & 235071 & - & 20901 \\
	\hline
\end{tabular}
\caption{Call sites generated by different analyses}
\label{experiment:Callsite}
\end{figure}

\subsection{RQ2: Accurancy}\label{subsec:accurancy}

We answer the second question by considering the number of generated call sites as an indication of accuracy. In type flow analysis, a method call $a.m()$ is resolved to $c.m()$ if class $c$ is included in $a$'s reaching type set and method $m()$ is defined for $c$. In general, a more accurate analysis often generates a smaller set of types for each calling variable, resulting fewer call edges in total in the call graph. The table included in Figure~\ref{experiment:Callsite} displays the number of call sites generated by different analyses. We also include the base call site counting, i.e., the number of call sites syntactically written in the source code, as the baseline at the \textbf{CS$_{\textsf{base}}$} column. In comparison to CHA, our approach has reduced a significant amount of call site edges.
Comparing to other two analyses, the number of call edges resolved by TFA are often larger than PTA and smaller than CHA on the same benchmark. The difference may be caused by our over approximation on analyzing array references, as well as the existence of unsolved call edges from e.g. JNI calls or reflective calls.


\begin{figure}[t!]
  \centering
\begin{tabular}{lcccc}
	\hline
	\textbf{bench} & \textbf{Node$_{\textsf{origin}}$} & \textbf{Node$_{\textsf{opt}}$} & \textbf{Reduce} & \hspace{2pt}\textbf{Time$(\textsf{s})$} \\
	\hline
	compress & 205 & 130 & $36.59\%$ & 0.008 \\
	crypto & 312 & 158 & $49.36\%$ & 0.010 \\
	bootstrap & 514 & 279 & $45.72\%$ & 0.019 \\
	commons-codec & 1742 & 886 & $49.14\%$ & 0.202 \\
	junit & 5859 & 3243 & $44.65\%$ & 2.269 \\
	commons-httpclient & 9708 & 5164 & $46.81\%$ & 4.651 \\
	serializer & 9600 & 6668 & $30.54\%$ & 3.198 \\
	xerces & 41634 & - & - & -\\
	eclipse & 34631 & - & - & -\\
	derby & 112265 & - & - & -\\
	xalan & 103697 & - & - & -\\
	antlr & 56589 & - & - & -\\
	batik & 89336 & - & - & -\\
	\hline
\end{tabular}
\caption{Optimization result}
\label{experiment:Optimalization}
\end{figure}

\subsection{RQ3: Optimization}\label{subsec:optimization}

We apply bisimulation minimization to merge nodes that have the same types as well as accessible types through fields. Thus we can reduce the space consumption when storing the result for subsequent analysis processes. Regarding the third research question, we calculated the number of ``effective'' nodes before and after optimization process. Besides, time consumption is another factor that we consider. The results are shown in the table in Figure~\ref{experiment:Optimalization}. We evaluated our optimization algorithm on all benchmarks 10 times and successfully executed $7$ out of the $13$ benchmarks. For those succeeded, the algorithm has reduced space usage by about $45\%$ on average, with a reasonable time consumption. The other $6$ benchmarks cannot be due to the exception ``OutOfMemoryError with JVM GC overhead limit exceeded'' being thrown, for which we leave the symbol ``-'' accordingly in the table.

\section{Related Work}\label{sec:related-work}

There are not many works focusing on general purpose call graph construction algorithms, and we give a brief review of these works first.

As stated in the introduction, Class Hierarchy Analysis (CHA)~\cite{Dean1995,Fernandez1995}, Rapid Type Analysis (RTA)~\cite{Bacon1996} and Variable Type Analysis (VTA)~\cite{Sundaresan2000} are efficient algorithms that conservatively resolves call sites without any help from points-to analysis. Grove et al.~\cite{Grove1997} introduced an approach to model context-sensitive and context-insensitive call graph construction. They define call graph in terms of three algorithm-specific parameter partial orders, and provide a method called Monotonic Refinement, potentially adding to the class sets of local variables and adding new contours to call sites, load sites, and store sites. Tip and Palsberg~\cite{Tip2000} Proposed four propagation-based call graph construction algorithms, CTA, MTA, FTA and XTA. CTA uses distinct sets for classes, MTA uses distinct sets for classes and fields, FTA uses distinct sets for classes and methods, and XTA uses distinct
sets for classes, fields, and methods. The constructed call graphs tend to contain slightly fewer method definitions when compared to RTA. It has been shown that associating a distinct set of types with each method in a class has a significantly greater impact on precision than using a distinct set for each field in a class. Reif et al.~\cite{Reif2016} study the construction of call graphs for Java libraries that abstract over all potential library usages, in a so-called \emph{open world} approach. They invented two concrete call graph algorithms for libraries based on adaptations of the CHA algorithm, to be used for software quality and security issues. In general they are interested in analyzing library without knowing client application, which is complementary to our work that has focus on client program while treating library calls as end nodes.

Call graphs may serve as a basis for points-to analysis, but often a points-to analysis implicitly computes a call graph on-the-fly, such as the context insensitive points-to algorithm implemented in Soot using SPARK~\cite{Lhotak2003}. Most context-sensitive points-to analysis algorithms (e.g.,~\cite{Milanova2005,Sridharan2006,Smaragdakis11,Tan16}) progress call edges together with value flow, to our knowledge. The main distinction of our approach from these points-to analysis is the usage of an abstract heap, as we are only interested in the actual reaching types of the receiver of a call. Nevertheless, unlike CHA and VTA, our methodology can be extended to context-sensitive settings.

\section{Conclusion}\label{sec:conclusion}

In this paper we have proposed Type Flow Analysis (TFA), an algorithm that constructs call graph edges for Object-Oriented programming languages. Different from points-to based analysis, we do not require a heap abstraction, so the computation is purely relational. We have proved that in the context-insensitive setting, our result is equivalent to that would be produced by a subset-based points-to analysis, regarding the core Object-Oriented language features. We have implemented the algorithm in the Soot compiler framework, and have conducted preliminary evaluation by comparing our results with those produced by the built-in CHA and points-to analysis algorithms in Soot on a selection of $13$ benchmark programs from SPECjvm2008 and DaCapo benchmark suites, and achieved promising results. In the future we plan to develop context-sensitive analysis algorithms based on TFA.

\bibliographystyle{plain}
\bibliography{literature}

\begin{thebibliography}{10}

\bibitem{soot}
Soot.
\newblock \url{https://sable.github.io/soot/}.
\newblock Accessed: 2019-06-10.

\bibitem{specjvm}
Specjvm2008.
\newblock \url{https://www.spec.org/jvm2008/}.
\newblock Accessed: 2019-06-18.

\bibitem{andersen94}
L.~O. Andersen.
\newblock {\em Program analysis and specialization for the {C} programming
  language}.
\newblock PhD thesis, DIKU, University of Copenhagen, May 1994.

\bibitem{Bacon1996}
David~F. Bacon and Peter~F. Sweeney.
\newblock Fast static analysis of c++ virtual function calls.
\newblock In {\em Proceedings of the 11th ACM SIGPLAN Conference on
  Object-oriented Programming, Systems, Languages, and Applications}, OOPSLA
  '96, pages 324--341, 1996.

\bibitem{Blackburn2006}
Stephen~M. Blackburn, Robin Garner, Chris Hoffmann, Asjad~M. Khang, Kathryn~S.
  McKinley, Rotem Bentzur, Amer Diwan, Daniel Feinberg, Daniel Frampton,
  Samuel~Z. Guyer, Martin Hirzel, Antony Hosking, Maria Jump, Han Lee,
  J.~Eliot~B. Moss, Aashish Phansalkar, Darko Stefanovi\'{c}, Thomas VanDrunen,
  Daniel von Dincklage, and Ben Wiedermann.
\newblock The {DaCapo Benchmarks}: Java benchmarking development and analysis.
\newblock In {\em Proceedings of the 21st Annual ACM SIGPLAN Conference on
  Object-oriented Programming Systems, Languages, and Applications}, OOPSLA
  '06, pages 169--190, 2006.

\bibitem{Dean1995}
Jeffrey Dean, David Grove, and Craig Chambers.
\newblock Optimization of object-oriented programs using static class hierarchy
  analysis.
\newblock In {\em Proceedings of the 9th European Conference on Object-Oriented
  Programming}, ECOOP '95, pages 77--101, 1995.

\bibitem{Fernandez1995}
Mary~F. Fern\'{a}ndez.
\newblock Simple and effective link-time optimization of modula-3 programs.
\newblock In {\em Proceedings of the ACM SIGPLAN 1995 Conference on Programming
  Language Design and Implementation}, PLDI '95, pages 103--115, 1995.

\bibitem{Grove1997}
David Grove, Greg DeFouw, Jeffrey Dean, and Craig Chambers.
\newblock Call graph construction in object-oriented languages.
\newblock In {\em Proceedings of the 12th ACM SIGPLAN Conference on
  Object-oriented Programming, Systems, Languages, and Applications}, OOPSLA
  '97, pages 108--124, 1997.

\bibitem{Kanellakis90}
P.~C. Kanellakis and S.~A. Smolka.
\newblock {CCS} expressions, finite state processes, and three problems of
  equivalence.
\newblock {\em Information and Computation}, 86:43–--68, 1990.

\bibitem{Kastrinis2013}
George Kastrinis and Yannis Smaragdakis.
\newblock Hybrid context-sensitivity for points-to analysis.
\newblock In {\em Proceedings of the 34th ACM SIGPLAN Conference on Programming
  Language Design and Implementation}, PLDI '13, pages 423--434, 2013.

\bibitem{Lhotak2003}
Ond\v{r}ej Lhot\'{a}k and Laurie Hendren.
\newblock Scaling java points-to analysis using spark.
\newblock In {\em Proceedings of the 12th International Conference on Compiler
  Construction}, CC'03, pages 153--169, 2003.

\bibitem{Milanova2005}
Ana Milanova, Atanas Rountev, and Barbara~G. Ryder.
\newblock Parameterized object sensitivity for points-to analysis for java.
\newblock {\em ACM Trans. Softw. Eng. Methodol.}, 14(1):1--41, January 2005.

\bibitem{Paige87}
R.~Paige and R.~E. Tarjan.
\newblock Three partition refinement algorithms.
\newblock {\em SIAM Journal on Computing}, 16:973–--989, 1987.

\bibitem{Reif2016}
Michael Reif, Michael Eichberg, Ben Hermann, Johannes Lerch, and Mira Mezini.
\newblock Call graph construction for java libraries.
\newblock In {\em Proceedings of the 2016 24th ACM SIGSOFT International
  Symposium on Foundations of Software Engineering}, FSE 2016, pages 474--486,
  2016.

\bibitem{Shivers91}
Olin~Grigsby Shivers.
\newblock {\em Control-flow Analysis of Higher-order Languages or Taming
  Lambda}.
\newblock PhD thesis, 1991.

\bibitem{Smaragdakis11}
Yannis Smaragdakis, Martin Bravenboer, and Ondrej Lhot\'{a}k.
\newblock Pick your contexts well: understanding object-sensitivity.
\newblock In {\em Proceedings of the 38th annual ACM SIGPLAN-SIGACT symposium
  on Principles of programming languages}, POPL '11, pages 17--30, 2011.

\bibitem{Sridharan2006}
Manu Sridharan and Rastislav Bod\'{\i}k.
\newblock Refinement-based context-sensitive points-to analysis for java.
\newblock In {\em Proceedings of the 27th ACM SIGPLAN Conference on Programming
  Language Design and Implementation}, PLDI '06, pages 387--400, 2006.

\bibitem{Sundaresan2000}
Vijay Sundaresan, Laurie Hendren, Chrislain Razafimahefa, Raja Vall{\'e}e-Rai,
  Patrick Lam, Etienne Gagnon, and Charles Godin.
\newblock Practical virtual method call resolution for java.
\newblock In {\em Proceedings of the 15th ACM SIGPLAN Conference on
  Object-oriented Programming, Systems, Languages, and Applications}, OOPSLA
  '00, pages 264--280, 2000.

\bibitem{Tan16}
Tian Tan, Yue Li, and Jingling Xue.
\newblock Making k-object-sensitive pointer analysis more precise with still
  k-limiting.
\newblock In {\em International Static Analysis Symposium, {SAS'16}}, 2016.

\bibitem{Tarjan72}
R.~E. Tarjan.
\newblock Depth-first search and linear graph algorithms.
\newblock {\em SIAM Journal on Computing}, 1:146–--160, 1972.

\bibitem{Tip2000}
Frank Tip and Jens Palsberg.
\newblock Scalable propagation-based call graph construction algorithms.
\newblock In {\em Proceedings of the 15th ACM SIGPLAN Conference on
  Object-oriented Programming, Systems, Languages, and Applications}, OOPSLA
  '00, pages 281--293, 2000.

\end{thebibliography}
\end{document}